\documentclass[journal]{IEEEtran}
\usepackage{amsmath,amsfonts,amssymb}
\usepackage{mathrsfs}
\usepackage{algorithmic}
\usepackage{algorithm}
\usepackage{array}
\usepackage{textcomp}
\usepackage{stfloats}
\usepackage{url}
\usepackage{hyperref}
\usepackage{dsfont}
\hypersetup{colorlinks=true,linkcolor=blue,filecolor=blue,urlcolor=blue,citecolor=blue}
\usepackage{xcolor}
\usepackage{verbatim}
\usepackage{graphicx}
\usepackage{tikz}
\usetikzlibrary{shapes.geometric, arrows, positioning, fit, calc, backgrounds}
\graphicspath{{Figures/}}
\usepackage{epstopdf}
\usepackage{cite}
\usepackage{multirow}
\usepackage{bm}
\usepackage{siunitx}
\usepackage{booktabs}
\newtheorem{remark}{Remark}
\newtheorem{assumption}{Assumption}
\newtheorem{theorem}{Theorem}
\newtheorem{corollary}{Corollary}
\usepackage{flushend}
\def\BibTeX{{\rm B\kern-.05em{\sc i\kern-.025em b}\kern-.08em T\kern-.1667em\lower.7ex\hbox{E}\kern-.125emX}}
\begin{document}

\title{Effective Depth in Joint Source-Channel Coding: An Implicit Equilibrium Analysis}

\author{Kaiwen Yu,~\IEEEmembership{Member,~IEEE},
	Gang Wu,~\IEEEmembership{Senior Member,~IEEE},	
	Xiaodong Xu,~\IEEEmembership{Senior Member,~IEEE},
	Yi Ma,~\IEEEmembership{Senior Member,~IEEE},
	and Rahim Tafazolli,~~\IEEEmembership{Fellow,~IEEE}

	\thanks{This work has been submitted to the IEEE for possible publication. Copyright may be transferred without notice, after which this version may no longer be accessible.}
	
	\thanks{Kaiwen Yu and Gang Wu are with the National Key Laboratory of Wireless Communications, University of Electronic Science and Technology of China, Chengdu 611731, China. Gang Wu is also with the Tianfu Jiangxi Laboratory, Chengdu, China (e-mail: yukaiwen@uestc.edu.cn; wugang99@uestc.edu.cn).}
	
	\thanks{Xiaodong Xu is with the State Key Laboratory of Networking and Switching Technology, Beijing University of Posts and Telecommunications, Beijing 100876, China. He is also with the Department of Broadband Communication, Peng Cheng Laboratory, Shenzhen 518055, China, and the Tianfu Jiangxi Laboratory, Chengdu, China (e-mail: xuxiaodong@bupt.edu.cn).}
	
	\thanks{Yi Ma and Rahim Tafazolli are with 5GIC \& 6GIC, Institute for Communication Systems (ICS), University of Surrey, GU2 7XH Guildford, U.K. (e-mail: y.ma@surrey.ac.uk; r.tafazolli@surrey.ac.uk).}
	
}

\maketitle

\begin{abstract}
	A fundamental design question in deep joint source-channel coding (Deep JSCC) remains insufficiently explored: given a channel signal-to-noise ratio (SNR), what effective computation depth is required for semantic reconstruction?
	Existing Deep JSCC systems typically employ fixed-depth neural architectures selected through empirical hyperparameter tuning, which may lead to unnecessary computation under favorable channel conditions and insufficient refinement under severe channel noise.
	This paper proposes \emph{Implicit-JSCC}, an implicit equilibrium framework in which semantic encoding and decoding are formulated as fixed-point equilibrium processes.
	The effective encoder and decoder depths are determined by residual-based solver convergence rather than manually predefined layer numbers, while parameter sharing across equilibrium iterations enables depth-independent parameter complexity.
	To analyze the resulting effective-depth behavior, we develop a Gaussian-process-inspired kernel evolution framework that models equilibrium iterations as an effective-depth propagation process.
	Since channel noise is injected between the encoder and decoder, the analysis tracks channel-induced representation perturbations across receiver-side equilibrium iterations and derives a theory-guided depth--SNR relationship.
	After offline calibration of the system-specific parameters, the resulting model characterizes the required receiver-side refinement depth under different SNRs.
	Extensive experiments show that Implicit-JSCC achieves competitive reconstruction performance while enabling residual-based adaptive inference and controllable computation--quality tradeoffs.
	The depth--SNR model further provides a characterization of the SNR-dependent refinement depth required to reach a prescribed perturbation tolerance.
\end{abstract}

\begin{IEEEkeywords}
	Semantic communication, deep joint source-channel coding, effective depth, deep equilibrium models, depth--SNR analysis.
\end{IEEEkeywords}

\section{Introduction}

\IEEEPARstart{T}{he} rapid evolution of sixth-generation (6G) wireless networks is driving a fundamental transition from conventional bit-level communication toward semantic communication, where the objective is no longer merely reliable bit delivery but the efficient exchange of task-relevant semantic information~\cite{weaver1953recent}.
This paradigm is particularly relevant to AI-agent communication scenarios, where visual observations must be exchanged under limited bandwidth and time-varying wireless channels.
Among existing semantic communication paradigms, deep joint source-channel coding (Deep JSCC)~\cite{bourtsoulatze2019deep} has emerged as a promising framework by replacing the conventional separate source and channel coding pipeline with an end-to-end neural architecture.
Recent studies have further improved the performance and robustness of Deep JSCC through advanced neural architectures and adaptive representation learning mechanisms~\cite{xu2022adjscc,yang2025swinjscc,yu2024two,yu2025partial}.

Despite these advances, as shown in Fig.~\ref{fig:framework}(a), existing Deep JSCC systems usually rely on fixed-depth encoder--decoder architectures, where the number of neural layers is manually selected through empirical hyperparameter tuning. However, the required effective depth should depend on channel conditions. In high-SNR regimes, excessive depth may introduce redundant computation after semantic representations have already been sufficiently recovered. In contrast, under low-SNR regimes, additional refinement may be required to suppress channel-induced perturbations. This motivates a fundamental design question:
\begin{quote}
	\emph{Given a channel SNR, what effective computation depth is required for semantic reconstruction?}
\end{quote}

This question calls for a JSCC framework that can adapt its effective computation depth to channel conditions.
Such adaptability should also be supported by an analytical framework that explains how channel-induced perturbations are progressively refined across effective iterations.
Existing studies primarily focus on architectural design and empirical performance optimization, while the relationship among effective depth, representation refinement, and channel SNR remains insufficiently understood.

To address this issue, as shown in Fig.~\ref{fig:framework}(b), this paper introduces an implicit equilibrium framework for JSCC based on deep equilibrium models (DEQs)~\cite{bai2019deep}.
Instead of explicitly stacking multiple neural layers, the proposed framework repeatedly applies a shared transformation across equilibrium iterations, so that the parameter complexity remains independent of the effective depth.
Furthermore, we develop a neural-network Gaussian process (NNGP)-inspired kernel evolution analysis~\cite{lee2018deep} to characterize the depth--SNR behavior of the proposed implicit framework.
Based on the NNGP kernel-evolution perspective, we establish a theory-guided depth--SNR analytical model that relates channel-induced representation perturbations to the required equilibrium iterations through an effective contraction characterization.
Since channel noise is injected between the encoder and decoder, the analytical characterization focuses on receiver-side perturbation propagation.
After calibrating the system-specific parameters, the resulting model characterizes the required receiver-side refinement depth under different SNRs.
The main contributions of this paper are summarized as follows:
\begin{itemize}
	
	\item
	We propose \emph{Implicit-JSCC}, an implicit equilibrium JSCC framework that reformulates semantic encoding and decoding as fixed-point equilibrium processes.
	Different from conventional Deep JSCC architectures with manually predefined layer numbers, the proposed framework determines the effective computation depth through solver convergence while sharing parameters across equilibrium iterations, thereby achieving parameter complexity that is independent of the executed effective depth.
	
	\item
	We develop a Gaussian-process-inspired kernel evolution analysis to characterize the effective-depth behavior of the proposed implicit JSCC framework under channel-induced perturbations.
	Based on the NNGP kernel-evolution perspective, we establish a theory-guided depth--SNR analytical model that relates channel-induced representation perturbations and the required equilibrium iterations.
	After calibrating the system-specific parameters, the resulting model provides an SNR-dependent characterization of the required refinement depth.
	
	\item
	Experiments on DIV2K and Kodak24 show that Implicit-JSCC achieves competitive reconstruction quality, supports residual-based adaptive inference under varying SNRs, and mitigates both insufficient refinement caused by too few iterations and redundant computation caused by unnecessarily large fixed-depth budgets.
\end{itemize}

The remainder of this paper is organized as follows.
Section II reviews related work.
Section III presents the proposed Implicit-JSCC architecture and residual-based adaptive solver.
Section IV develops the NNGP-inspired depth--SNR analytical framework and calibrated effective-depth model.
Section V reports the experimental results.
Section VI concludes this paper.
Table~\ref{tab:notations} summarizes the main symbols used throughout this paper.

\begin{figure}[!t]
	\centering
	\includegraphics[width=\columnwidth]{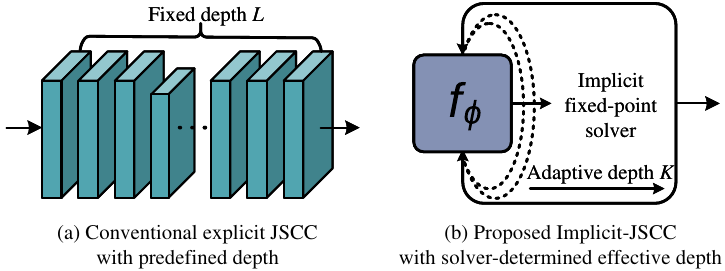}
	\caption{Conceptual comparison between conventional explicit JSCC with predefined depth and the proposed Implicit-JSCC with solver-determined effective depth.}
	\label{fig:framework}
\end{figure}

\begin{table}[!t]
	\centering
	\caption{Main notations}
	\label{tab:notations}
	\renewcommand{\arraystretch}{1.2}
	\begin{tabular}{@{}cl@{}}
		\toprule
		Symbol & Description \\
		\midrule
		$\mathbf{s}$ & Source image \\
		$\mathbf{x}$ & Transmitted channel symbols \\
		$\mathbf{y}$ & Received noisy symbols \\
		$R$ & Bandwidth ratio \\
		$\mathbf{z}^{*}$ & Equilibrium representation \\
		$k$ & Fixed-point solver iteration index \\
		$K_{\mathrm{enc}}, K_{\mathrm{dec}}$ & Effective encoder and decoder iteration counts \\
		$K_{\mathrm{dec}}^{\mathrm{pred}}$ & Theory-guided predicted decoder effective depth \\
			$K_{\mathrm{dec}}^{\mathrm{fix}}$ & Preset decoder iteration budget for fixed-depth inference \\
			$\mathcal{K}^{(k)}$ & NNGP covariance kernel at solver iteration $k$ \\
			$\mathcal{J}_{\star}$ & Local linearized kernel-propagation operator \\
			$\sigma_n^2$ & Channel noise variance \\
		$\alpha$ & Damping coefficient in the residual implicit update \\
		$\beta$ & Effective kernel propagation factor\\
		$\tau$ & Generic solver stopping threshold \\
		$\epsilon_{\mathrm{tol}}$ & Kernel-space convergence tolerance\\
		$\tau_{\mathrm{enc}},\tau_{\mathrm{dec}}$ & Encoder and decoder stopping thresholds \\
		$c_0,c_1$ & Residual and noise-dependent perturbation coefficients \\
		$\eta=c_1/c_0$ & Relative noise-scale ratio \\
		$\mathcal{V}$ & Calibration set \\
		$\mathcal{S},\mathcal{S}_{\mathrm{H}}$ & Calibration SNR grid and its high-SNR subset \\
		$\overline{K}_{\mathrm{dec}}(s)$ & Mean measured decoder depth at SNR $s$ \\
		$\overline{K}_{\mathrm{H}}$ & Mean measured depth over $\mathcal{S}_{\mathrm{H}}$ \\
		$\hat{\beta},\hat{\eta},\hat{c}_0,\hat{c}_1$ & Calibrated parameter estimates \\
		\bottomrule
	\end{tabular}
\end{table}

\section{Related Work}\label{s2}

\subsection{Deep JSCC and Semantic Communication}

Recent JSCC and semantic communication studies have improved reconstruction quality and robustness through more expressive neural architectures~\cite{nguyen2026contemporary,zhang2026comai}.
Transformer-based and multi-scale vision models have been introduced to enhance semantic feature extraction and robustness against channel impairments~\cite{peng2025robust}, while state space models have been explored to capture long-range dependencies with reduced computational complexity~\cite{MambaJSCC}.
Generative models have also been investigated\cite{fan2026generative}, including semantic image transmission based on a variational autoencoder--generative adversarial network (VAE-GAN)~\cite{omi2026vaegan} and diffusion-based JSCC for variable-rate multiple-input multiple-output (MIMO) channel state information (CSI) reconstruction~\cite{liu2026residual}.
Beyond image reconstruction, semantic communication has been extended to task-oriented and multimodal scenarios, such as knowledge-base-assisted transmission, scene-graph-aided image communication, cross-modal semantic generation, speech translation, and semantic video transmission~\cite{zhu2026scene,li2026universal,chen2026crossmodal,liu2026speech,gao2026volumetric,wang2026surveillance}.
Other studies have investigated semantic reliability and practical deployment issues, including semantic hybrid automatic repeat request (HARQ), coverage enhancement, distributed semantic systems for heterogeneous Internet of Things (IoT) devices, and free-space optical semantic communication~\cite{an2026spharq,wang2026coverage,zeng2026distributed,wei2026fso,yu2025distributed}.
These works demonstrate the broad applicability of semantic communication in wireless networks, but most of them improve representation capability or transmission robustness within explicitly predefined neural architectures.

Existing adaptive JSCC methods further improve robustness by adjusting feature importance, attention weights, modulation strength, or coding rate according to channel conditions~\cite{xu2022adjscc}.
However, their neural transceiver architectures are usually still explicitly stacked, so once the encoder and decoder backbones are selected, the number of computational layers remains fixed during inference.
Therefore, existing adaptive JSCC methods mainly adapt feature representation or transmission strategy, rather than the effective computation depth of the semantic transceiver.
In contrast, this paper focuses on effective-depth inference.
Instead of merely increasing backbone capacity or using a fixed-depth neural transceiver for all channel conditions, the proposed Implicit-JSCC framework formulates semantic encoding and decoding as fixed-point equilibrium processes.
The effective encoder and decoder depths are determined by solver convergence, enabling adaptive semantic refinement without manually specifying the number of layers.

\subsection{Theoretical Analysis of Semantic Communication}

Early efforts beyond Shannon's classical framework aimed to define and quantify semantic information by considering meaning, synonymity, task relevance, and knowledge representation~\cite{weaver1953recent}.
Information-theoretic tools have further been used to formulate semantic communication objectives, such as information bottleneck-based semantic representation learning~\cite{zhang2026beyond,liu2026rateib,zhang2026rib,wei2026taskagnostic}, conditional rate-distortion formulations for task-adaptive transmission~\cite{he2026taskadaptive}, and information bottleneck or rate-splitting methods for broadcast and multi-user semantic communication~\cite{xu2026broadcast}.
Robustness, security, and adaptability have also been investigated through robust information bottleneck methods, secure semantic transmission, physical-layer protection, feedback-assisted adaptation, knowledge base updates, and foundation model-assisted adaptation~\cite{zhang2026rib,zeng2026distributed,li2026deepguard,zhang2026skb,he2026taskadaptive,liu2026foundation}.
However, these studies mainly focus on rate, distortion, semantic relevance, task utility, reliability, or system-level adaptation, rather than the effective computation depth of neural semantic transceivers under channel noise.

Implicit layers and deep equilibrium models formulate neural representations as fixed points of nonlinear transformations, enabling parameter sharing across effective iterations and avoiding explicit layer stacking~\cite{bai2019deep}.
Such models are attractive for effective-depth inference because the number of refinement steps can be determined by solver convergence rather than by a manually predefined network depth.
However, their potential for wireless semantic communication remains underexplored, especially under channel-dependent perturbations.
In parallel, neural network Gaussian process (NNGP) theory provides an analytical tool for studying infinitely wide neural networks through recursively evolving covariance kernels~\cite{lee2018deep}.
Since NNGP characterizes how covariance kernels evolve with network depth, it offers a useful perspective for studying representation propagation and perturbation decay across effective-depth iterations.
Nevertheless, existing semantic communication theories have not explicitly connected channel SNR, kernel-space representation perturbation, and effective-depth inference in an implicit JSCC framework.

This paper addresses this gap by developing an NNGP-inspired analytical characterization for implicit-JSCC.
Unlike existing semantic communication theories that mainly focus on rate-distortion, information bottleneck, or task utility, our analysis studies how channel-induced perturbations affect receiver-side representation refinement across effective-depth iterations.
Together with the implicit equilibrium transceiver design, this analysis provides a theory-guided explanation of the depth--SNR relationship and supports residual-based effective-depth inference for semantic transmission.

\section{Implicit Equilibrium JSCC Architecture}\label{s3}

\subsection{JSCC System Overview}

Consider an image source $\mathbf{s}\in\mathbb{R}^{H\times W\times C}$, where $H$, $W$, and $C$ denote the image height, width, and number of color channels, respectively.
The transmitter directly maps the source image into channel symbols through a neural encoder, and the receiver reconstructs the image from the noisy channel observations through a neural decoder.

The transmitter generates a real-valued tensor
\begin{equation}
	\mathbf{x}
	=
	E_{\bm{\theta}}(\mathbf{s}).
\end{equation}
Here, $E_{\bm{\theta}}(\cdot)$ denotes the overall neural encoder mapping, which is later instantiated by a convolutional stem, an implicit equilibrium encoder, and an output head.
The corresponding complex-valued representation is denoted by
\(\mathbf{x}_{\mathrm c}\in\mathbb{C}^{{C_c\times H'\times W'}}\), where $C_c$, $H'$, and $W'$ denote the channel-symbol dimension and spatial dimensions.
The bandwidth ratio is defined as
$
	R
	=
	\frac{C_cH'W'}{CHW},
$
which measures the number of transmitted complex channel symbols per source pixel. 
The transmitted symbols are power-normalized to satisfy
$
	\mathbb{E}\left[|x_{{\mathrm c},i}|^2\right]=1.
$
The wireless channel is modeled as
\begin{equation}\label{eq:wireless_channel}
	\mathbf{y}_{\mathrm c}
	=
	\mathbf{h}\odot\mathbf{x}_{\mathrm c}
	+
	\mathbf{n}_{\mathrm c},
\end{equation}
where $\mathbf{h}$ denotes the complex channel coefficient tensor, $\odot$ denotes element-wise multiplication, and 
$\mathbf{n}_{\mathrm c}\sim\mathcal{CN}(\mathbf{0},\sigma_n^2\mathbf{I})$
is additive complex Gaussian noise.
The AWGN case is recovered by setting $\mathbf{h}=\mathbf{1}$.
For Rayleigh fading, the entries of $\mathbf{h}$ follow $\mathcal{CN}(0,1)$.
For Rician fading, the entries of $\mathbf{h}$ follow $\mathcal{CN}(\mu,\sigma_h^2)$ with factor $\kappa=|\mu|^2/\sigma_h^2$.
For Rayleigh and Rician fading, a minimum mean-squared error (MMSE) equalizer is applied before neural decoding.

The complex received signal $\mathbf{y}_{\mathrm c}$ is converted back to its real-valued representation $\mathbf{y}$ by separating the in-phase and quadrature components. 
The neural decoder then reconstructs the source image as
\begin{equation}
	\hat{\mathbf{s}}
	=
	D_{\bm{\varphi}}(\mathbf{y}).
\end{equation}
Here, $D_{\bm{\varphi}}(\cdot)$ denotes the overall neural decoder mapping, which is later instantiated by an input projection layer, an implicit equilibrium decoder, and a reconstruction head.

\begin{figure*}[t]
	\centering
	\includegraphics[width=0.95\textwidth]{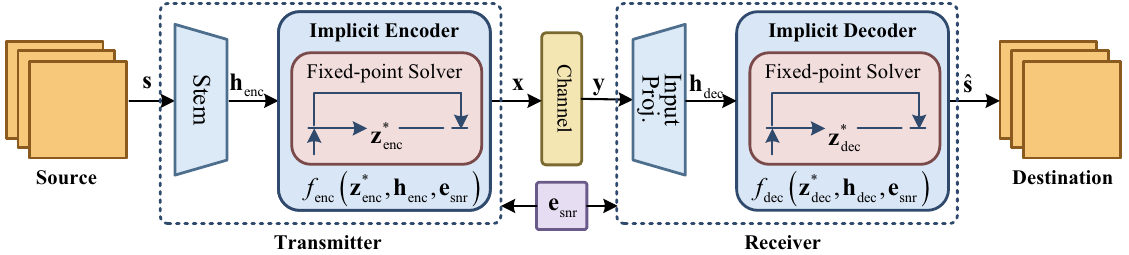}
	\caption{System architecture of the proposed Implicit-JSCC framework. The encoder and decoder are formulated as implicit equilibrium modules, where a shared fixed-point block is repeatedly applied until the residual-based stopping criterion is satisfied.}
	\label{fig:systemmodel}
\end{figure*}

As illustrated in Fig.~\ref{fig:systemmodel}, the proposed Implicit-JSCC replaces explicitly stacked neural layers with implicit equilibrium modules.
The effective encoder and decoder depths, denoted by $K_{\mathrm{enc}}$ and $K_{\mathrm{dec}}$, are determined by fixed-point solver convergence rather than manually specified layer numbers.
The detailed transmitter and receiver operations are described in the following subsections.

Intuitively, an implicit equilibrium module can be viewed as repeatedly refining a latent representation until the representation becomes self-consistent under a shared transformation. Unlike an explicitly stacked network, where each layer has independent parameters and the depth is fixed after architecture design, the implicit module reuses the same transformation and lets the solver determine how many refinement steps are executed for a given input and channel condition.

\subsection{Implicit Encoder--Decoder Transceiver}

At the transmitter, the source image
$
\mathbf{s}\in\mathbb{R}^{H\times W\times C}
$
is first projected into low-dimensional semantic features by a convolutional stem:
\begin{equation}
	\mathbf{h}_{\mathrm{enc}}
	=
	\mathrm{Stem}_{\mathrm{enc}}(\mathbf{s}).
\end{equation}

The channel SNR is embedded through sinusoidal encoding followed by a lightweight multilayer perceptron (MLP):
\begin{equation}
	\mathbf{e}_{\mathrm{snr}}
	=
	\mathrm{MLP}
	\left(
	\mathrm{SinEmbed}
	\left(
	\mathrm{SNR}_{\mathrm{dB}}
	\right)
	\right).
\end{equation}
The embedding $\mathbf{e}_{\mathrm{snr}}$ provides channel-state conditioning for the implicit encoder and decoder.

Instead of passing $\mathbf{h}_{\mathrm{enc}}$ through a predefined number of neural layers, the transmitter obtains an encoder equilibrium representation by solving
\begin{equation}\label{eq:enc_equilibrium}
	\mathbf{z}_{\mathrm{enc}}^{*}
	=
	f_{\mathrm{enc}}
	\left(
	\mathbf{z}_{\mathrm{enc}}^{*},
	\mathbf{h}_{\mathrm{enc}},
	\mathbf{e}_{\mathrm{snr}};
	\bm{\psi}_{\mathrm{enc}}
	\right),
\end{equation}
where $f_{\mathrm{enc}}(\cdot)$ is the shared implicit encoder update and $\bm{\psi}_{\mathrm{enc}}$ denotes its parameters. 
The encoder parameters are shared across equilibrium iterations, while the encoder and decoder use separate parameter sets.
The equilibrium representation is then mapped into channel symbols:
$
	\mathbf{x}
	=
	\mathrm{Head}_{\mathrm{enc}}
	\left(
	\mathbf{z}_{\mathrm{enc}}^{*}
	\right).
$

At the receiver, the noisy observation $\mathbf{y}$ is first projected into received semantic features:
\begin{equation}
	\mathbf{h}_{\mathrm{dec}}
	=
	\mathrm{InputProj}(\mathbf{y}).
\end{equation}

The decoder then refines the noisy received representation through an implicit equilibrium process:
\begin{equation}\label{eq:dec_equilibrium}
	\mathbf{z}_{\mathrm{dec}}^{*}
	=
	f_{\mathrm{dec}}
	\left(
	\mathbf{z}_{\mathrm{dec}}^{*},
	\mathbf{h}_{\mathrm{dec}},
	\mathbf{e}_{\mathrm{snr}};
	\bm{\psi}_{\mathrm{dec}}
	\right),
\end{equation}
where $f_{\mathrm{dec}}(\cdot)$ is the shared implicit decoder update and $\bm{\psi}_{\mathrm{dec}}$ denotes its parameters. 
The decoder equilibrium representation $\mathbf{z}_{\mathrm{dec}}^{*}$ is determined by the received noisy features and the SNR condition.
Finally, the reconstructed image is obtained by
$
	\hat{\mathbf{s}}
	=
	\mathrm{Head}_{\mathrm{dec}}
	\left(
	\mathbf{z}_{\mathrm{dec}}^{*}
	\right).
$

\subsection{Shared Implicit Update and Adaptive Solver}

Although the transmitter and receiver use different parameter sets, their implicit modules follow the same generic fixed-point form:
\begin{equation}\label{eq:generic_equilibrium}
	\mathbf{z}_{q}^{*}
	=
	f_q
	\left(
	\mathbf{z}_{q}^{*},
	\mathbf{h}_{q},
	\mathbf{e}_{\mathrm{snr}};
	\bm{\psi}_{q}
	\right),
	\quad
	q\in\{\mathrm{enc},\mathrm{dec}\},
\end{equation}
where $\mathbf{h}_{q}$ corresponds to the encoder feature $\mathbf{h}_{\mathrm{enc}}$ for $q=\mathrm{enc}$ and to the received feature $\mathbf{h}_{\mathrm{dec}}$ for $q=\mathrm{dec}$.

\subsubsection{Contraction-Oriented Implicit Update}
\label{sec:block}

The implicit update is designed to promote stable fixed-point convergence while retaining an explicit conditioning path from the encoder or decoder input feature.
In the implemented model, the implicit mapping is instantiated by a residual convolutional DEQ block.
Specifically, the conditioning feature $\mathbf{h}_q$ is projected to $\mathbf{u}_q$, and the fixed-point map is written in the compact form
\begin{equation}\label{eq:block}
	f_q(\mathbf{z}_{q})
	=
	(1-\alpha_q)\mathbf{z}_{q}
	+
	\alpha_q
	\left[
	\mathbf{u}_{q}
	+
	\Delta_q
	\left(
	\mathbf{z}_{q},
	\mathbf{u}_{q},
	\mathbf{e}_{\mathrm{snr}}
	\right)
	\right],
\end{equation}
where $\Delta_q(\cdot)$ is implemented by a residual convolutional module with normalization, SNR-conditioned modulation, and residual injection.
This update can be interpreted as a damped interpolation between the current equilibrium state $\mathbf{z}_q$ and an SNR-conditioned residual target $\mathbf{u}_q+\Delta_q(\cdot)$.
The damping coefficient $\alpha_q\in(0,1)$ is learnable and bounded in implementation.

\textit{Proposition 1 (Sufficient contraction condition):}
For fixed $\mathbf{u}_q$ and $\mathbf{e}_{\mathrm{snr}}$, assume that
$
g_q(\mathbf{z}_{q})
=
\mathbf{u}_{q}
+
\Delta_q(\mathbf{z}_{q},\mathbf{u}_{q},\mathbf{e}_{\mathrm{snr}})
$
is $L_g$-Lipschitz continuous with respect to $\mathbf{z}_{q}$, i.e.,
\begin{equation}
	\left\|
	g_q(\mathbf{z}_1)
	-
	g_q(\mathbf{z}_2)
	\right\|
	\le
	L_g
	\left\|
	\mathbf{z}_1-\mathbf{z}_2
	\right\|.
	\label{eq:g_lipschitz}
\end{equation}
Then the damped update
$
f_q(\mathbf{z}_{q})
=
(1-\alpha_q)\mathbf{z}_{q}
+
\alpha_q g_q(\mathbf{z}_{q})
$
satisfies
\begin{equation}\label{eq:lipschitz_general}
	\left\|
	f_q(\mathbf{z}_1)
	-
	f_q(\mathbf{z}_2)
	\right\|
	\le
	\left(
	1-\alpha_q+\alpha_q L_g
	\right)
	\left\|
	\mathbf{z}_1-\mathbf{z}_2
	\right\|.
\end{equation}
Therefore, $f_q$ is contractive if
$
1-\alpha_q+\alpha_q L_g<1
$,
which is equivalent to $L_g<1$ for $\alpha_q\in(0,1)$.

\textit{Proof:}
For any two states $\mathbf{z}_1$ and $\mathbf{z}_2$, the difference between their damped updates is
\begin{align}
	f_q(\mathbf{z}_1)-f_q(\mathbf{z}_2)
	&=
	(1-\alpha_q)(\mathbf{z}_1-\mathbf{z}_2) \notag\\
	&\quad+
	\alpha_q
	\left[
	g_q(\mathbf{z}_1)-g_q(\mathbf{z}_2)
	\right].
	\label{eq:f_difference}
\end{align}
Taking the norm on both sides and applying the triangle inequality gives
\begin{align}
	\|f_q(\mathbf{z}_1)-f_q(\mathbf{z}_2)\|
	&\le
	(1-\alpha_q)\|\mathbf{z}_1-\mathbf{z}_2\| \notag\\
	&\quad
	+
	\alpha_q\|g_q(\mathbf{z}_1)-g_q(\mathbf{z}_2)\|.
	\label{eq:triangle_bound}
\end{align}
Using the $L_g$-Lipschitz continuity of $g_q$ in~\eqref{eq:g_lipschitz}, we further obtain
\begin{align}
	\left\|
	f_q(\mathbf{z}_1)-f_q(\mathbf{z}_2)
	\right\|
	&\le
	(1-\alpha_q)
	\left\|
	\mathbf{z}_1-\mathbf{z}_2
	\right\|
	+
	\alpha_q L_g
	\left\|
	\mathbf{z}_1-\mathbf{z}_2
	\right\| \notag\\
	&=
	\left(
	1-\alpha_q+\alpha_q L_g
	\right)
	\left\|
	\mathbf{z}_1-\mathbf{z}_2
	\right\|.
\end{align}
This proves~\eqref{eq:lipschitz_general}. 
For $f_q$ to be contractive, the Lipschitz upper bound should be smaller than one, i.e.,
$
	1-\alpha_q+\alpha_q L_g<1 .
$
Since $\alpha_q\in(0,1)$, this condition is equivalent to
$
\alpha_q(L_g-1)<0
$,
and hence reduces to $L_g<1$.
\hfill $\blacksquare$

This result indicates that damping alone does not guarantee contraction when $L_g>1$.
In practice, the implemented implicit block combines bounded damping and normalization, while Jacobian regularization is used during training to promote, rather than guarantee, stable fixed-point behavior.

\subsubsection{Adaptive Fixed-Point Solver}

The equilibrium state is obtained through fixed-point iterations:
\begin{equation}\label{eq:solver}
	\mathbf{z}_{q}^{(k+1)}
	=
	f_q
	\left(
	\mathbf{z}_{q}^{(k)},
	\mathbf{h}_{q},
	\mathbf{e}_{\mathrm{snr}};
	\bm{\psi}_q
	\right),
	\quad
	q\in\{\mathrm{enc},\mathrm{dec}\}.
\end{equation}

The solver computes the mixed absolute--relative residual
\begin{equation}\label{eq:stopping}
	r_{q}^{(k)}
	=
	\frac{
		\|
		\mathbf{z}_{q}^{(k+1)}
		-
		\mathbf{z}_{q}^{(k)}
		\|
	}{
		\max
		\left\{
		\|
		\mathbf{z}_{q}^{(k)}
		\|,
		1
		\right\}
	},
	\quad
	q\in\{\mathrm{enc},\mathrm{dec}\},
\end{equation}
where the unit lower bound makes the criterion absolute when the state norm is small and relative when it is large.
The solver stops when $r_q^{(k)}<\tau_q$ or when the maximum iteration number $K_{\max}$ is reached.
The residual $r_q^{(k)}$ measures how much the latent representation changes after one additional shared-block update. A small residual indicates that the representation has nearly reached a fixed point, and the corresponding iteration count is interpreted as the executed effective depth.

The effective depth of module $q$ is therefore defined as
\begin{equation}\label{eq:effective_depth}
	K_q
	=
	\min
	\left\{
	k+1:
	r_q^{(k)}<\tau_q
	\right\},
	\quad
	q\in\{\mathrm{enc},\mathrm{dec}\}.
\end{equation}
If the stopping criterion is not satisfied within $K_{\max}$ iterations, we set $K_q=K_{\max}$.
Thus, $K_{\mathrm{enc}}$ and $K_{\mathrm{dec}}$ are emergent computation depths determined by the corresponding implicit dynamics, SNR condition, input features, and convergence tolerance.

Due to parameter sharing across equilibrium iterations, the number of trainable parameters of the implicit encoder and decoder is independent of the effective depths $K_{\mathrm{enc}}$ and $K_{\mathrm{dec}}$. 
Thus, Implicit-JSCC has $O(1)$ parameter complexity with respect to the solver-determined effective depth, while its inference computation scales with the executed number of fixed-point iterations.
Algorithm~\ref{alg:eqsolve} summarizes the residual-based fixed-point solver used to obtain the equilibrium state and the corresponding effective depth.

\begin{algorithm}[t]
	\caption{$\mathrm{EqSolve}$: Residual-based Fixed-Point Solver}
	\label{alg:eqsolve}
	\begin{algorithmic}[1]
		
		\STATE \textbf{Input:} Implicit update $f$, conditioning feature $\mathbf{h}$, SNR embedding $\mathbf{e}_{\mathrm{snr}}$, stopping threshold $\tau$, maximum iteration number $K_{\max}$.
		\STATE \textbf{Output:} Final solver iterate $\mathbf{z}^{*}$ and effective depth $K$.
		
		\STATE Initialize $\mathbf{z}^{(0)}=\mathbf{0}$ and set $K=K_{\max}$.
		
		\FOR{$k=0,1,\ldots,K_{\max}-1$}
		\STATE Update $\mathbf{z}^{(k+1)}=f(\mathbf{z}^{(k)},\mathbf{h},\mathbf{e}_{\mathrm{snr}})$.
		\STATE Compute the mixed absolute--relative residual:
		\[
		r^{(k)}
		=
		\frac{
			\left\|
			\mathbf{z}^{(k+1)}
			-
			\mathbf{z}^{(k)}
			\right\|
		}{
			\max
			\left\{
			\left\|
			\mathbf{z}^{(k)}
			\right\|,
			1
			\right\}
		}.
		\]
		\IF{$r^{(k)}<\tau$}
		\STATE Set $K=k+1$ and stop iteration.
		\ENDIF
		\ENDFOR
		
		\STATE Set $\mathbf{z}^{*}$ as the final solver iterate.
		\STATE \textbf{return} $\mathbf{z}^{*},K$.
		
	\end{algorithmic}
\end{algorithm}

\subsection{Training Objective and Implicit Differentiation}

To improve the convergence stability of the implemented implicit blocks, we introduce a Jacobian regularization term during training.
For the encoder and decoder implicit updates, let
\begin{equation}
	\mathbf{J}_{q}
	=
	\left.
	\frac{\partial f_q}{\partial \mathbf{z}_{q}}
	\right|_{\mathbf{z}_{q}=\mathbf{z}_{q}^{*}},
	\quad
	q\in\{\mathrm{enc},\mathrm{dec}\},
\end{equation}
where $\mathbf{J}_{q}$ denotes the Jacobian of the implicit update with respect to the equilibrium state. 
Using random projection vectors $\mathbf{v}_{\mathrm{enc}}$ and $\mathbf{v}_{\mathrm{dec}}$ of compatible dimensions, the Jacobian regularization is written as

\begin{equation}
	\mathcal{L}_{\mathrm{Jac}}
	=
	\mathbb{E}_{\mathbf{v}_{\mathrm{enc}},\mathbf{v}_{\mathrm{dec}}}
	\left[
	\left\|
	\mathbf{J}_{\mathrm{enc}}\mathbf{v}_{\mathrm{enc}}
	\right\|_2^2
	+
	\left\|
	\mathbf{J}_{\mathrm{dec}}\mathbf{v}_{\mathrm{dec}}
	\right\|_2^2
	\right].
	\label{eq:jacobian_loss}
\end{equation}
The overall training objective is
\begin{equation}
	\mathcal{L}
	=
	\frac{1}{N}
	\left\|
	\hat{\mathbf{s}}
	-
	\mathbf{s}
	\right\|_2^2
	+
	\lambda_{\mathrm{Jac}}
	\mathcal{L}_{\mathrm{Jac}},
	\label{eq:training_objective}
\end{equation}
where $N$ denotes the total number of reconstructed elements in a mini-batch, and $\lambda_{\mathrm{Jac}}$ controls the strength of the stability regularization. 
The first term in \eqref{eq:training_objective} is the mean-squared error (MSE) reconstruction loss and serves as the main supervision signal for image reconstruction, while the Jacobian term suppresses unstable local amplification in the implicit dynamics and promotes fixed-point convergence.

During training, gradients through each equilibrium state are computed via implicit differentiation.
The backward linear solve is implemented through iterative vector-Jacobian products, avoiding storage of all forward fixed-point iterates.
For $q\in\{\mathrm{enc},\mathrm{dec}\}$, the adjoint variable $\mathbf{m}_q$ is obtained by solving
\begin{equation}
	\begin{aligned}
		\left(
		\mathbf{I}
		-
		\mathbf{J}_{q}^{\top}
		\right)
		\mathbf{m}_{q}
		&=
		\left(
		\frac{\partial \mathcal{L}}
		{\partial \mathbf{z}_{q}^{*}}
		\right)^{\top},
		\\
		\frac{\partial \mathcal{L}}
		{\partial \bm{\psi}_{q}}
		&=
		\mathbf{m}_{q}^{\top}
		\frac{
			\partial f_q(\mathbf{z}_{q}^{*},\mathbf{h}_{q},\mathbf{e}_{\mathrm{snr}};\bm{\psi}_{q})
		}{
			\partial \bm{\psi}_{q}
		}.
	\end{aligned}
	\label{eq:implicit_grad}
\end{equation}
This avoids explicitly backpropagating through all intermediate solver iterations and improves memory efficiency compared with unrolled deep networks.

\begin{algorithm}[t]
	\caption{Inference Procedure of Implicit-JSCC}
	\label{alg:implicit_jscc}
	\begin{algorithmic}[1]
		
		\STATE \textbf{Input:} Source image $\mathbf{s}$, channel SNR $\mathrm{SNR}_{\mathrm{dB}}$, maximum iteration number $K_{\max}$, thresholds $\tau_{\mathrm{enc}}$ and $\tau_{\mathrm{dec}}$.
		\STATE \textbf{Output:} Reconstructed image $\hat{\mathbf{s}}$, effective depths $K_{\mathrm{enc}}$ and $K_{\mathrm{dec}}$.
		
		\STATE Compute
		$\mathbf{e}_{\mathrm{snr}}
		=\mathrm{MLP}(
		\mathrm{SinEmbed}(\mathrm{SNR}_{\mathrm{dB}}))$.
		
		\vspace{0.2em}
		\STATE \textbf{Transmitter:}
		\STATE $\mathbf{h}_{\mathrm{enc}}=\mathrm{Stem}_{\mathrm{enc}}(\mathbf{s})$.
		\STATE $(\mathbf{z}_{\mathrm{enc}}^{*},K_{\mathrm{enc}})
		=
		\mathrm{EqSolve}
		(
		f_{\mathrm{enc}},
		\mathbf{h}_{\mathrm{enc}},
		\mathbf{e}_{\mathrm{snr}},
		\tau_{\mathrm{enc}},
		K_{\max}
		)$.
		\STATE $\mathbf{x}=\mathrm{Head}_{\mathrm{enc}}(\mathbf{z}_{\mathrm{enc}}^{*})$.
		
		\vspace{0.2em}
		\STATE \textbf{Wireless channel:}
		\STATE Convert $\mathbf{x}$ to $\mathbf{x}_{\mathrm c}$, transmit it according to~\eqref{eq:wireless_channel}, and obtain the real-valued received tensor $\mathbf{y}$.
		
		\vspace{0.2em}
		\STATE \textbf{Receiver:}
		\STATE $\mathbf{h}_{\mathrm{dec}}=\mathrm{InputProj}(\mathbf{y})$.
		\STATE $(\mathbf{z}_{\mathrm{dec}}^{*},K_{\mathrm{dec}})
		=
		\mathrm{EqSolve}
		(
		f_{\mathrm{dec}},
		\mathbf{h}_{\mathrm{dec}},
		\mathbf{e}_{\mathrm{snr}},
		\tau_{\mathrm{dec}},
		K_{\max}
		)$.
		\STATE $\hat{\mathbf{s}}=\mathrm{Head}_{\mathrm{dec}}(\mathbf{z}_{\mathrm{dec}}^{*})$.
		
		\STATE \textbf{return} $\hat{\mathbf{s}}$, $K_{\mathrm{enc}}$, $K_{\mathrm{dec}}$.
		
	\end{algorithmic}
\end{algorithm}

\section{Depth--SNR Analysis of Implicit Equilibrium Dynamics}
\label{sec:theory}

This section analyzes the central question of this paper:
\emph{given a channel SNR, what effective computation depth is required for semantic reconstruction?}
To answer this question, we develop an NNGP-inspired analytical framework for characterizing the depth--SNR behavior of the proposed Implicit-JSCC system.
It provides a tractable model for explaining how channel-induced perturbations affect equilibrium convergence.

\begin{remark}[\textbf{Receiver-side perturbation focus}]
	Since channel noise is injected between the encoder and decoder, it directly perturbs the decoder input representation.
	Therefore, the following kernel evolution analysis focuses on the decoder-side equilibrium dynamics, where the impact of channel perturbations on effective depth can be most directly characterized.
\end{remark}

\subsection{Theoretical Assumptions}

To obtain a tractable analytical characterization, we adopt the following assumptions. 

\begin{assumption}[\textbf{Infinite-width regime}]
The hidden feature dimension tends to infinity, such that the implicit transformation can be approximately characterized by an NNGP.
\end{assumption}

\begin{assumption}[\textbf{Locally linearized kernel dynamics}]
The representation features remain in a highly correlated regime, allowing the nonlinear kernel recursion to be locally approximated by linear contraction dynamics.
\end{assumption}

\begin{assumption}[\textbf{AWGN channel for theoretical analysis}]
For analytical tractability, the kernel evolution analysis focuses on the AWGN case.
\end{assumption}

\begin{assumption}[\textbf{Local effective kernel contraction}]
	The locally linearized kernel dynamics admit an effective kernel propagation factor $0<\beta<1$, which supports stable convergence of the linearized surrogate dynamics.
\end{assumption}

The factor $\beta$ is different from the feature-space Lipschitz constant $L_g$ in Proposition~1.
While $L_g$ characterizes the contraction property of the inner update $g_q(\cdot)$ in feature space, $\beta$ summarizes the effective propagation strength of kernel-space perturbations under the locally linearized surrogate dynamics.
In general, stronger feature-space contraction tends to induce faster covariance perturbation decay, suggesting that $\beta$ is related to the effective feature-space contraction strength.
Therefore, $\beta$ should be interpreted as a calibrated kernel-space propagation factor rather than a direct Lipschitz constant of the implemented neural block.

\subsection{Kernel Evolution of Implicit Equilibrium Iterations}

Consider the contraction-oriented implicit block introduced in Section~\ref{sec:block}:
\begin{equation}
	f(\mathbf{z})
	=
	(1-\alpha)\mathbf{z}
	+
	\alpha g(\mathbf{z}),
	\label{eq:damped_residual_kernel_form}
\end{equation}
where $\alpha\in(0,1)$ denotes the damping coefficient and $g(\cdot)$ is the inner nonlinear transformation.

Let
\begin{equation}
	\mathcal{K}^{(k)}(\mathbf{z},\mathbf{z}')
	=
	\mathbb{E}
	\left[
	z_i^{(k)}(\mathbf{z})
	z_i^{(k)}(\mathbf{z}')
	\right]
	\label{eq:kernel_definition}
\end{equation}
denote the NNGP covariance of representation features at equilibrium iteration $k$, which serves as a statistical surrogate for tracking representation-level perturbations.

Following the NNGP framework~\cite{lee2018deep}, the evolution of representation covariance can be described through an effective kernel-propagation perspective. For a standard infinitely wide nonlinear transformation, the corresponding kernel recursion takes the form
\begin{equation}
	\mathcal{K}^{(k+1)}
	=
	\sigma_b^2
	+
	\sigma_w^2
	F_{\phi}\left(\mathcal{K}^{(k)}\right),
	\label{eq:nngp_kernel_recursion}
\end{equation}
where $\sigma_w^2$ and $\sigma_b^2$ denote the weight and bias variances, respectively, and $F_{\phi}(\cdot)$ is determined by the activation function.

Let $\mathcal{K}^{\star}$ denote a local equilibrium kernel of
\eqref{eq:nngp_kernel_recursion}. Under the high-correlation assumption, the
nonlinear kernel propagation in \eqref{eq:nngp_kernel_recursion} is locally
approximated by a first-order expansion around $\mathcal{K}^{\star}$, which
motivates the following surrogate dynamics for the kernel deviation:
\begin{equation}
	\mathcal{K}^{(k+1)}
	-
	\mathcal{K}^{\star}
	=
	\mathcal{J}_{\star}
	\left[
	\mathcal{K}^{(k)}
	-
	\mathcal{K}^{\star}
	\right],
	\label{eq:linearized_kernel_recursion}
\end{equation}
where higher-order terms are neglected and $\mathcal{J}_{\star}$ denotes the local linearized kernel-propagation operator associated with \eqref{eq:nngp_kernel_recursion} at $\mathcal{K}^{\star}$.
For a tractable scalar characterization, the effective propagation factor $\beta$ in \textit{Assumption~4} is used to describe how much the kernel deviation from $\mathcal{K}^{\star}$ can remain
after one locally linearized iteration:
\begin{equation}
		\left\|
		\mathcal{J}_{\star}
		\left[
		\mathcal{K}^{(k)}
		-
		\mathcal{K}^{\star}
		\right]
		\right\|
		\leq
		\beta
		\left\|
		\mathcal{K}^{(k)}
		-
		\mathcal{K}^{\star}
		\right\|,
		\qquad
		0<\beta<1.
		\label{eq:effective_kernel_contraction}
\end{equation}
In a standard infinitely wide feed-forward network, this local kernel-propagation behavior is determined by the activation function and weight statistics.
For the implemented Implicit-JSCC system, however, the effective propagation behavior is also affected by the trained transceiver and solver configuration.
Therefore, we do not attempt to derive a closed-form expression for $\beta$ from individual architectural components.
Instead, $\beta$ is used as a calibrated effective kernel propagation factor of the trained system, which is estimated from validation-set effective-depth statistics.
A smaller $\beta$ indicates faster local convergence and stronger suppression of kernel-space perturbations.

\subsection{Channel Perturbation and Kernel Convergence}

Under the AWGN channel, the decoder input representation is modeled as
\begin{equation}
	\mathbf{Y}
	=
	\mathbf{Z}_{\rm clean}
	+
	\mathbf{N},
	\quad
	\mathbf{N}
	\sim
	\mathcal{N}(\mathbf{0},\sigma_n^2\mathbf{I}).
	\label{eq:decoder_awgn_perturbation}
\end{equation}
The injected channel noise perturbs the initial decoder-side kernel as
\begin{equation}
	\mathcal{K}_{\mathrm{noisy}}^{(0)}
	=
	\mathcal{K}_{\mathrm{clean}}^{(0)}
	+
	\sigma_n^2\mathbf{I}.
	\label{eq:noisy_kernel}
\end{equation}	
Define the initial kernel deviation from equilibrium as
\begin{equation}
	\Delta_0
	\triangleq
	\left\|
	\mathcal{K}_{\mathrm{noisy}}^{(0)}
	-
	\mathcal{K}^\star
	\right\|,
	\label{eq:initial_kernel_deviation_basic}
\end{equation}
where $\|\cdot\|$ denotes an operator norm in the kernel space.
The quantity $\Delta_0$ measures the initial representation perturbation induced by channel noise at the decoder input.

Within the locally linearized kernel model, the following theorem characterizes the decay of decoder-side channel perturbations.

\begin{theorem}[Kernel convergence under AWGN perturbation]
		\label{thm:kernel_convergence_awgn}
		Under the locally linearized surrogate dynamics in~\eqref{eq:linearized_kernel_recursion} and the effective contraction condition in~\eqref{eq:effective_kernel_contraction},
		for the decoder-side noisy initial kernel defined in~\eqref{eq:noisy_kernel} and the initial deviation $\Delta_0$ defined in~\eqref{eq:initial_kernel_deviation_basic}, the kernel deviation after $k$ equilibrium iterations satisfies
	\begin{equation}
		\left\|
		\mathcal{K}^{(k)}
		-
		\mathcal{K}^{\star}
		\right\|
		\leq
		\beta^k \Delta_0 .
		\label{eq:kernel_convergence_bound}
	\end{equation}
\end{theorem}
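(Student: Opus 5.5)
The bound follows by induction on $k$. We combine the linearized recursion \eqref{eq:linearized_kernel_recursion} with the one-step contraction \eqref{eq:effective_kernel_contraction}, starting from the decoder-side initial kernel. First we fix the initialization: the decoder solver starts from the received representation, so $\mathcal{K}^{(0)}=\mathcal{K}_{\mathrm{noisy}}^{(0)}$ as given in \eqref{eq:noisy_kernel}. By the definition \eqref{eq:initial_kernel_deviation_basic}, this gives $\|\mathcal{K}^{(0)}-\mathcal{K}^\star\|=\Delta_0=\beta^0\Delta_0$, which is the base case $k=0$.

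For the inductive step, we assume $\|\mathcal{K}^{(k)}-\mathcal{K}^\star\|\le\beta^k\Delta_0$. The surrogate dynamics \eqref{eq:linearized_kernel_recursion} give
$\mathcal{K}^{(k+1)}-\mathcal{K}^\star=\mathcal{J}_\star[\mathcal{K}^{(k)}-\mathcal{K}^\star]$.
Taking norms and applying \eqref{eq:effective_kernel_contraction} with the current deviation as argument yields
\begin{equation*}
\|\mathcal{K}^{(k+1)}-\mathcal{K}^\star\|\le\beta\|\mathcal{K}^{(k)}-\mathcal{K}^\star\|\le\beta^{k+1}\Delta_0 .
\end{equation*}
The second inequality uses the inductive hypothesis together with $\beta>0$, so multiplying by $\beta$ preserves the inequality. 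Equivalently, one can unroll the recursion to $\mathcal{K}^{(k)}-\mathcal{K}^\star=\mathcal{J}_\star^k[\mathcal{K}^{(0)}-\mathcal{K}^\star]$ and apply the contraction $k$ times.

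The one step needing care is checking that \eqref{eq:effective_kernel_contraction} applies at every iteration and not only at $k=0$. As stated in Assumption~4, it holds for every deviation produced by the locally linearized dynamics. If one prefers to read it as an operator-norm bound $\|\mathcal{J}_\star\|\le\beta$, then submultiplicativity gives $\|\mathcal{J}_\star^k\|\le\beta^k$ directly. Under the linearized surrogate, the iterates stay in the regime where this assumption is imposed: they move monotonically toward $\mathcal{K}^\star$, since the deviation norm is nonincreasing. Given that, no further obstacle remains, and because $0<\beta<1$ the bound also shows geometric decay of the channel-induced perturbation.
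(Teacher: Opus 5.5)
Your induction is the paper's argument: the linearized recursion plus the contraction condition \eqref{eq:effective_kernel_contraction} give the one-step bound, which is iterated from $\mathcal{K}^{(0)}=\mathcal{K}^{(0)}_{\mathrm{noisy}}$ to reach $\beta^k\Delta_0$. Two minor remarks: $\mathcal{K}^{(0)}=\mathcal{K}^{(0)}_{\mathrm{noisy}}$ is a modeling convention built into the theorem, not a consequence of how the solver starts (Algorithm~\ref{alg:eqsolve} actually initializes $\mathbf{z}^{(0)}=\mathbf{0}$); and your monotonicity justification for applying the contraction at every step is unnecessary and slightly circular, since \eqref{eq:effective_kernel_contraction} is already stated for every iterate $\mathcal{K}^{(k)}$.
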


\textit{Proof.}
From the local surrogate dynamics in \eqref{eq:linearized_kernel_recursion}
and the effective contraction condition in
\eqref{eq:effective_kernel_contraction}, we have
\[
\left\|
\mathcal{K}^{(k+1)}
-
\mathcal{K}^{\star}
\right\|
=
\left\|
\mathcal{J}_{\star}
\!\left[
\mathcal{K}^{(k)}
-
\mathcal{K}^{\star}
\right]
\right\|
\leq
\beta
\left\|
\mathcal{K}^{(k)}
-
\mathcal{K}^{\star}
\right\|.
\]
Applying this inequality recursively from $0$ to $k-1$ gives
\[
\left\|
\mathcal{K}^{(k)}
-
\mathcal{K}^{\star}
\right\|
\leq
\beta
\left\|
\mathcal{K}^{(k-1)}
-
\mathcal{K}^{\star}
\right\|
\leq
\cdots
\leq
\beta^k
\left\|
\mathcal{K}^{(0)}
-
\mathcal{K}^{\star}
\right\|.
\]
For the noisy decoder initialization,
$\mathcal{K}^{(0)}=\mathcal{K}^{(0)}_{\mathrm{noisy}}$. Using the definition
of $\Delta_0$ in \eqref{eq:initial_kernel_deviation_basic}, we obtain
\[
\left\|
\mathcal{K}^{(k)}
-
\mathcal{K}^{\star}
\right\|
\leq
\beta^k
\left\|
\mathcal{K}^{(0)}_{\mathrm{noisy}}
-
\mathcal{K}^{\star}
\right\|
=
\beta^k\Delta_0.
\]
Since $0<\beta<1$, the decoder-side kernel perturbation decays
geometrically with the equilibrium iteration index $k$.
\hfill $\blacksquare$

This result provides an analytical explanation for why stronger channel perturbations require more receiver-side equilibrium iterations. Since a larger $\sigma_n^2$ generally induces a larger initial kernel deviation $\Delta_0$, lower-SNR channels require more iterations to reach the same perturbation tolerance.

\subsection{Depth--SNR Characterization}

\begin{corollary}[\textbf{Required effective depth}]
\label{cor:required_depth}
Let $\epsilon_{\mathrm{tol}}$ denote the target kernel convergence tolerance.
Under the conditions of Theorem~\ref{thm:kernel_convergence_awgn}, consider the iteration index $k$ required to satisfy
\begin{equation}
	\left\|\mathcal{K}^{(k)}-\mathcal{K}^\star\right\|
	\leq
	\epsilon_{\mathrm{tol}}.
	\label{eq:kernel_tolerance_requirement}
\end{equation}
Then, the corresponding sufficient continuous depth estimate is
\begin{equation}
	K_{\mathrm{dec}}^{\mathrm{pred}}
	=
	\frac{
		\log\left(\Delta_0/\epsilon_{\mathrm{tol}}\right)
	}{
		\log(1/\beta)
	}
	.
	\label{eq:required_depth_estimate}
\end{equation}
\end{corollary}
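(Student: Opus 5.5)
The plan is to derive the corollary directly from the geometric bound of Theorem~\ref{thm:kernel_convergence_awgn}. That theorem gives $\|\mathcal{K}^{(k)}-\mathcal{K}^\star\|\le\beta^k\Delta_0$ for every $k\ge 0$. Consequently, any $k$ with
\[
\beta^k\Delta_0\le\epsilon_{\mathrm{tol}}
\]
automatically satisfies the tolerance requirement~\eqref{eq:kernel_tolerance_requirement}. The whole argument therefore reduces to inverting this scalar inequality in $k$. It yields a \emph{sufficient} depth, not a necessary one, which is why the statement calls $K_{\mathrm{dec}}^{\mathrm{pred}}$ a sufficient estimate.

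The inversion proceeds in two steps.

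First, assume $\Delta_0>0$ and $\epsilon_{\mathrm{tol}}>0$ so that logarithms are defined. Taking logarithms of both sides, which preserves the inequality because $\log$ is monotone, gives
\[
k\log\beta+\log\Delta_0\le\log\epsilon_{\mathrm{tol}},
\]
or equivalently
\[
k\log(1/\beta)\ge\log(\Delta_0/\epsilon_{\mathrm{tol}}).
\]

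Second, use Assumption~4, $0<\beta<1$, which makes $\log(1/\beta)>0$. Dividing by this positive quantity preserves the direction of the inequality:
\[
k\ge\frac{\log(\Delta_0/\epsilon_{\mathrm{tol}})}{\log(1/\beta)}=K_{\mathrm{dec}}^{\mathrm{pred}}.
\]
Hence every iteration index at or above $K_{\mathrm{dec}}^{\mathrm{pred}}$, in particular the integer $\lceil K_{\mathrm{dec}}^{\mathrm{pred}}\rceil$, guarantees the tolerance. The continuous quantity $K_{\mathrm{dec}}^{\mathrm{pred}}$ is exactly the threshold at which the upper bound $\beta^k\Delta_0$ equals $\epsilon_{\mathrm{tol}}$.

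Two edge cases need separate treatment.

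If $\Delta_0\le\epsilon_{\mathrm{tol}}$, the formula gives $K_{\mathrm{dec}}^{\mathrm{pred}}\le 0$. Here $k=0$ already satisfies the requirement, so the estimate should be read as $\max\{K_{\mathrm{dec}}^{\mathrm{pred}},0\}$.

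If $\Delta_0=0$, the kernel is already at equilibrium, and the bound holds trivially for every $k$.

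The main obstacle is interpretive rather than technical: the corollary asserts an estimate, not an exact iteration count. I would state explicitly that sufficiency follows from the upper bound in the theorem, and that the actual minimal $k$ may be smaller whenever the contraction inequality~\eqref{eq:effective_kernel_contraction} is not tight.

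Finally, I would note the SNR dependence implied by~\eqref{eq:noisy_kernel}. Since $\Delta_0$ grows with $\sigma_n^2$, the estimate $K_{\mathrm{dec}}^{\mathrm{pred}}$ increases logarithmically as the SNR decreases.
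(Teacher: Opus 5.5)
Your proposal is correct and takes the same route as the paper: you use the bound $\beta^k\Delta_0\le\epsilon_{\mathrm{tol}}$ from Theorem~\ref{thm:kernel_convergence_awgn} as a sufficient condition and invert it with logarithms, and dividing by the positive $\log(1/\beta)$ instead of the negative $\log\beta$ (with a sign flip) is only cosmetic. Your treatment of $\Delta_0\le\epsilon_{\mathrm{tol}}$ mirrors the paper's standing assumption $\Delta_0>\epsilon_{\mathrm{tol}}$ and its remark on the nontrivial perturbation regime; your closing claim that $\Delta_0$ always grows with $\sigma_n^2$ is stronger than what \eqref{eq:noisy_kernel} guarantees for a general norm, but the corollary does not need it.
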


\textit{Proof.}
From Theorem~\ref{thm:kernel_convergence_awgn}, a sufficient condition for \eqref{eq:kernel_tolerance_requirement} is
\begin{equation}
	\beta^{k}\Delta_0
	\leq
	\epsilon_{\mathrm{tol}} .
	\label{eq:depth_condition_start}
\end{equation}
Assuming $\Delta_0>\epsilon_{\mathrm{tol}}$, dividing both sides by $\Delta_0$ gives
\begin{equation}
	\beta^{k}
	\leq
	\frac{\epsilon_{\mathrm{tol}}}{\Delta_0}.
\end{equation}
Taking logarithms on both sides yields
\begin{equation}
	k\log\beta
	\leq
	\log\left(
	\frac{\epsilon_{\mathrm{tol}}}{\Delta_0}
	\right).
\end{equation}
Since $0<\beta<1$, we have $\log\beta<0$.
Dividing by $\log\beta$ therefore reverses the inequality:
\begin{equation}
	k
	\geq
	\frac{
		\log\left(\epsilon_{\mathrm{tol}}/\Delta_0\right)
	}{
		\log\beta
	}
	=
	\frac{
		\log\left(\Delta_0/\epsilon_{\mathrm{tol}}\right)
	}{
		\log(1/\beta)
	}
	\equiv
	K_{\mathrm{dec}}^{\mathrm{pred}}.
\end{equation}
This continuous iteration threshold gives \eqref{eq:required_depth_estimate}.
\hfill $\blacksquare$

Since the signal power is normalized, the AWGN noise variance satisfies
$\sigma_n^2=10^{-\mathrm{SNR}_{\mathrm{dB}}/10}$. 
Motivated by the additive perturbation form in~\eqref{eq:noisy_kernel}, we adopt the following first-order parameterization of the initial decoder-side kernel deviation:
\begin{equation}
	\Delta_0(\mathrm{SNR})
	=
	c_0+c_1 10^{-\mathrm{SNR}_{\mathrm{dB}}/10},
	\label{eq:delta0_snr_parameterization}
\end{equation}
where $c_0$ captures the residual high-SNR deviation caused by compression or modeling error, and $c_1$ controls the contribution of channel noise.
Substituting \eqref{eq:delta0_snr_parameterization} into \eqref{eq:required_depth_estimate} yields the continuous effective-depth estimate:
\begin{equation}
		K_{\mathrm{dec}}^{\mathrm{pred}}(\mathrm{SNR})
		=
		\frac{
			\log\left[
			\left(
		c_0+c_1 10^{-\mathrm{SNR}_{\mathrm{dB}}/10}
		\right)
		/\epsilon_{\mathrm{tol}}
		\right]
		}{
			\log(1/\beta)
		},
	\label{eq:depth_snr_relation}
\end{equation}
which defines a continuous
analytical model for the theory-guided effective depth. 
This relation indicates that lower SNR enlarges the initial perturbation and
therefore increases the required effective depth, with a logarithmic dependence
on the perturbation magnitude.

\begin{remark}[\textbf{Nontrivial perturbation regime}]
The depth estimate in \eqref{eq:required_depth_estimate} is mainly used in the
nontrivial regime where the initial kernel deviation is larger than the target
tolerance, i.e., $\Delta_0>\epsilon_{\mathrm{tol}}$. If
$\Delta_0\leq\epsilon_{\mathrm{tol}}$, the kernel deviation already satisfies
the convergence requirement at initialization in the idealized kernel model, and
no additional equilibrium refinement is required. In practical solvers, however,
a minimum number of iterations may still be performed due to implementation
constraints and numerical stabilization.
\end{remark}

\begin{remark}[\textbf{Kernel depth versus solver residual depth}]
	The predicted depth $K_{\mathrm{dec}}^{\mathrm{pred}}$ derived above is defined in terms of kernel-space perturbation convergence and gives a continuous prediction of the mean iterations required for the kernel deviation $\|\mathcal{K}^{(k)}-\mathcal{K}^\star\|$ to reach a prescribed tolerance $\epsilon_{\mathrm{tol}}$.
	In contrast, the decoder effective depth observed during inference, denoted by $K_{\mathrm{dec}}$, is determined by the feature-space residual stopping criterion, namely the relative difference between two consecutive solver iterates.
	
	Therefore, $K_{\mathrm{dec}}^{\mathrm{pred}}$ and $K_{\mathrm{dec}}$ are not strictly identical quantities.
	The former is a theory-guided depth under the locally linearized kernel dynamics, whereas the latter is an implementation-level solver iteration count measured from finite-width feature trajectories. 
	In this paper, $K_{\mathrm{dec}}^{\mathrm{pred}}$ is interpreted as the predicted mean operating depth for adaptive equilibrium refinement, rather than an exact per-sample solver iteration count or a reconstruction-quality-optimal depth.
\end{remark}

\begin{algorithm}[!t]
	\caption{Offline Calibration of the Theory-Guided Depth--SNR Model}
	\label{alg:depth_snr_fitting}
	\begin{algorithmic}[1]
		\STATE \textbf{Input:} Trained Implicit-JSCC, calibration set $\mathcal{V}$, SNR grid $\mathcal{S}$,
		high-SNR subset $\mathcal{S}_{\mathrm{H}}=\{s\in\mathcal{S}:s\geq5~\mathrm{dB}\}$, and solver thresholds.
		\STATE Set $\epsilon_{\mathrm{tol}}=\tau_{\mathrm{dec}}$.
		\FOR{each $s\in\mathcal{S}$}
		\STATE Run inference on $\mathcal{V}$ and measure
		$\overline{K}_{\mathrm{dec}}(s)$.
		\ENDFOR
		\STATE Compute $\overline{K}_{\mathrm{H}}$ from
		$\{\overline{K}_{\mathrm{dec}}(s):
		s\in\mathcal{S}_{\mathrm{H}}\}$.
		\STATE Solve \eqref{eq:depth_snr_fitting} by constrained nonlinear
		least squares.
		\STATE Generate the continuous depth--SNR map using
		\eqref{eq:depth_snr_relation}.
		\STATE \textbf{return}
		$\hat{\beta}$, $\hat{\eta}$, $\hat{c}_0$, $\hat{c}_1$ and the calibrated theory-guided depth--SNR map.
	\end{algorithmic}
\end{algorithm}

It is worth emphasizing that the proposed depth--SNR relation is a theory-guided parametric model rather than a purely empirical regression curve. The kernel perturbation analysis determines its functional form, including the logarithmic dependence on the initial channel-induced perturbation and the effective contraction factor. The measured effective-depth statistics are used only to calibrate the system-specific parameters of the trained finite-width Implicit-JSCC model, including the effective propagation factor and relative perturbation scale.

Such calibration is necessary because the effective equilibrium dynamics depend not only on the wireless channel condition but also on the source distribution and the neural transceiver architecture. The combined effects of finite-width implementation, SNR-conditioned neural processing, and approximate fixed-point solving cannot be determined from the idealized model alone. The resulting depth--SNR map should therefore be interpreted as a calibrated theory-guided characterization of the mean operating depth for the trained system, rather than as an unconstrained empirical fit or a data-independent prediction for arbitrary transceivers.

Algorithm~\ref{alg:depth_snr_fitting} summarizes the offline calibration of the theory-guided depth--SNR model in~\eqref{eq:depth_snr_relation}.
The calibration is performed once after training on a held-out validation set and is not involved in online inference.
During actual transmission, the decoder still determines its effective depth through the residual-based stopping rule.
First, the trained Implicit-JSCC is evaluated on the calibration set to measure the mean decoder effective depth at each SNR.
Second, the mean high-SNR depth $\overline{K}_{\mathrm{H}}$ is used to anchor the noise-independent term $c_0$, because the noise-dependent term vanishes as the SNR increases. Finally, $\beta$ and the positive ratio $\eta=c_1/c_0$ are estimated by fitting the analytical prediction $K_{\mathrm{dec}}^{\mathrm{pred}}(s)$ to the measured depths. We set
$c_0=\epsilon_{\mathrm{tol}} (1/\beta)^{\overline{K}_{\mathrm{H}}}$ and $c_1=\eta c_0$, and fit the two remaining variables using
\begin{equation}
	(\hat{\beta},\hat{\eta})
	=
	\underset{\substack{0<\beta<1\\ \eta>0}}{\arg\min}
	\sum_{s\in\mathcal{S}}
	\left[
	K_{\mathrm{dec}}^{\mathrm{pred}}(s;\beta,\eta)
	-\overline{K}_{\mathrm{dec}}(s)
	\right]^2 .
	\label{eq:depth_snr_fitting}
\end{equation}
Here, $\mathcal{S}$ denotes the set of SNR values used for calibration, and
$\overline{K}_{\mathrm{dec}}(s)$ is the mean measured decoder effective
depth on the calibration set at SNR $s$.
$K_{\mathrm{dec}}^{\mathrm{pred}}(s;\beta,\eta)$ denotes the continuous prediction
in \eqref{eq:depth_snr_relation} after substituting the above expressions
for $c_0$ and $c_1$. The calibrated parameters are recovered as
$\hat{c}_0=\epsilon_{\mathrm{tol}}
(1/\hat{\beta})^{\overline{K}_{\mathrm{H}}}$ and
$\hat{c}_1=\hat{\eta}\hat{c}_0$.

We set $\epsilon_{\mathrm{tol}}=\tau_{\mathrm{dec}}$ as a normalized
reference. This is a normalization convention for fitting and
does not assert that the kernel-space and feature-space stopping criteria are
physically identical. Since $c_0$ and $c_1$ are identifiable only relative
to this tolerance, we report $c_0/\epsilon_{\mathrm{tol}}$ and $c_1/c_0$.
For the AWGN setting, the calibration is performed on the independent DIV2K validation set and gives $\overline{K}_{\mathrm{H}}=8$, $\hat{\beta}=0.2138$, and $\hat{\eta}=0.2832$. This corresponds to $c_0/\epsilon_{\mathrm{tol}}=2.29\times10^5$ and $c_1/c_0=0.2832$.
These values are fixed when generating the continuous
depth--SNR curve.

\begin{table}[t!]
	\centering
	\caption{Implementation parameters}
	\label{tab:implementation_settings}
	\renewcommand{\arraystretch}{1.1}
	\begin{tabular}{@{}ll@{}}
		\toprule
		Item & Setting \\
		\midrule
		Hidden dimension & 192 \\
		SNR embedding dimension & 32 \\
		Stem channels & 64 \\
		Rician factor & $\kappa=3$ \\
		
		Maximum solver depth & $K_{\max}=30$ \\
		Forward solver & Plain fixed-point iteration \\
		Stopping thresholds & $\tau_{\mathrm{enc}}=0.08$, $\tau_{\mathrm{dec}}=0.05$ \\
		
		Optimizer / loss & AdamW / MSE + Jacobian regularization \\
		Jacobian regularization weight & $\lambda_{\mathrm{Jac}}=5\times10^{-4}$\\
		Learning rate & Initial/final $3\times10^{-5}$ to $5\times10^{-7}$ \\
		Weight decay / grad. clip & $10^{-4}$ / 0.5 \\
		Batch size / epochs & 8 / 180 \\
		\bottomrule
	\end{tabular}
\end{table}

\section{Experimental Results}\label{sec:exp}

The proposed Implicit-JSCC model is trained on DIV2K~\cite{agustsson2017ntire} using randomly cropped $256\times256$ image patches and is evaluated on the Kodak24 dataset~\cite{kodak1993}.
The model is trained from scratch under each matched channel setting. We use a 25-epoch SNR curriculum that gradually expands the lower SNR bound from 0 dB to $-10$ dB, followed by full-range SNR sampling with high-SNR oversampling over $[15,30]$~dB to improve high-SNR reconstruction. The remaining implementation details are summarized in Table~\ref{tab:implementation_settings}.

We compare the proposed Implicit-JSCC with three learned JSCC baselines and two separated digital baselines. The separated baselines combine Better Portable Graphics (BPG) or Joint Photographic Experts Group (JPEG) source coding with low-density parity-check (LDPC) channel coding and adaptive modulation and coding (AMC). For each SNR, all candidate modulation and coding schemes (MCSs) are evaluated through bit-level channel simulation and decodability testing under the same complex-symbol budget corresponding to $R=1/12$. 
The highest-quality decodable operating point is then used to determine the source-bit budget for evaluating all Kodak images. If no candidate MCS can successfully recover the complete source bitstream, the corresponding PSNR and SSIM values are set to zero. Therefore, the separated-baseline curves should be interpreted as best-decodable performance envelopes under the same channel-use constraint, rather than as fixed online AMC policies.

\begin{itemize}
	\item \textbf{DeepJSCC~\cite{bourtsoulatze2019deep}:} a representative learned joint source-channel coding baseline for wireless image transmission. For a fair comparison, we train one DeepJSCC model for each channel type using the same image size, bandwidth ratio, and training SNR set as Implicit-JSCC.
	
	\item \textbf{SwinJSCC~\cite{yang2025swinjscc}:} a Swin Transformer-based JSCC baseline. We retrain SwinJSCC under the same image size and bandwidth ratio. For each channel, we follow the two-stage training protocol: \texttt{SwinJSCC\_w/o\_SAandRA} is first trained at 10 dB, and the resulting checkpoint is then used to initialize the SNR-adaptive \texttt{SwinJSCC\_w/\_SA} model trained over the same multi-SNR set.
	
	\item \textbf{ADJSCC~\cite{xu2022adjscc}:} an attention-based adaptive JSCC baseline. We use the official attention encoder and decoder and train channel-matched models for AWGN, Rayleigh, and Rician channels. The SNR is provided to the attention modules and the channel layer, following the adaptive design of ADJSCC.
	
	\item \textbf{BPG+LDPC+AMC:} a separated source-channel coding baseline using BPG as the source codec. The complex-symbol budget is matched to the same bandwidth ratio $R=1/12$. We use an LDPC block length of 768, candidate code rates $1/2$, $2/3$, and $3/4$, and unit-power binary phase-shift keying (BPSK), quadrature phase-shift keying (QPSK), 16-ary quadrature amplitude modulation (16-QAM), and 64-ary quadrature amplitude modulation (64-QAM).
	
	\item \textbf{JPEG+LDPC+AMC:} a separated baseline evaluated using the same LDPC simulation, modulation--coding search, channel-use constraint, and decodability criterion as BPG+LDPC+AMC, with JPEG used as the source codec.
	
\end{itemize}

\subsection{Overall Reconstruction Performance}

\begin{figure*}[!t]
	\centering
	\includegraphics[width=0.92\textwidth]{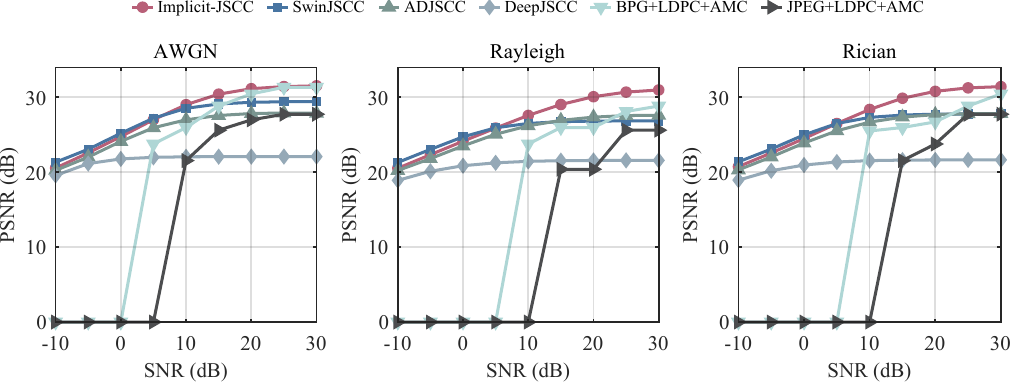}
	\caption{PSNR performance comparison under AWGN, Rayleigh, and Rician channels with the same bandwidth ratio $R=1/12$. The BPG+LDPC+AMC and JPEG+LDPC+AMC curves denote best-decodable MCS envelopes obtained by exhaustive bit-level evaluation of BPSK, QPSK, 16-QAM, and 64-QAM with three LDPC rates.}
	\label{fig:algorithm_comparison_psnr}
\end{figure*}

\begin{figure*}[!t]
	\centering
	\includegraphics[width=0.92\textwidth]{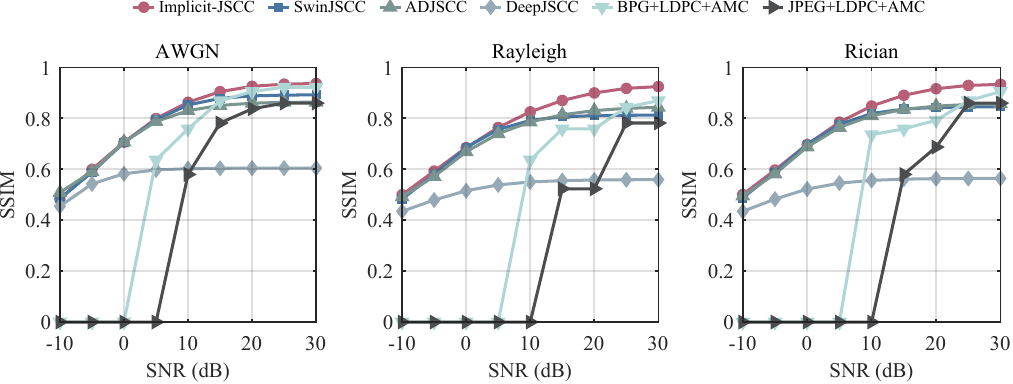}
	\caption{SSIM performance comparison under AWGN, Rayleigh, and Rician channels with the same bandwidth ratio $R=1/12$. The separated baselines use the same best-decodable MCS envelope and outage definition as in Fig.~\ref{fig:algorithm_comparison_psnr}.}
	\label{fig:algorithm_comparison_ssim}
\end{figure*}

Fig.~\ref{fig:algorithm_comparison_psnr} compares the PSNR performance under AWGN, Rayleigh, and Rician channels. 
Several observations can be made. 
First, all learned JSCC methods exhibit graceful degradation as the SNR decreases, whereas the best-decodable BPG+LDPC+AMC and JPEG+LDPC+AMC envelopes show threshold-like behavior and outage regions. 
This is because the separated schemes rely on successful channel decoding before image reconstruction, while neural JSCC maps channel perturbations directly into reconstruction distortion. 
Second, Implicit-JSCC maintains stable reconstruction quality across all three channel models, indicating that the implicit equilibrium decoder can refine noisy channel observations not only in AWGN but also under fading-induced perturbations. 
Third, compared with explicit neural baselines, Implicit-JSCC achieves competitive PSNR while using a compact parameter budget and supporting solver-determined adaptive computation. 
This suggests that the proposed method does not simply improve performance by increasing backbone capacity; instead, it benefits from equilibrium refinement and parameter sharing across iterations.

Fig.~\ref{fig:algorithm_comparison_ssim} reports the corresponding SSIM results. 
The SSIM curves show a trend consistent with the PSNR comparison: Implicit-JSCC preserves structural fidelity over a wide SNR range and remains robust under fading channels. 
The competitive SSIM performance at medium and high SNRs indicates that the implicit decoder does not merely reduce pixel-wise distortion, but also helps recover coherent image structures after channel corruption. 
The separated-baseline envelopes again exhibit abrupt transitions because a small change in SNR can determine whether a higher-throughput modulation--coding combination becomes exactly decodable. 
In contrast, Implicit-JSCC produces continuous quality adaptation, which is desirable for wireless semantic transmission under gradually varying channel conditions.

\subsection{Adaptive Depth Control}

The implicit receiver enables inference-time computation--quality control through the stopping threshold. To evaluate this property, we vary the decoder stopping threshold $\tau_{\mathrm{dec}}$ while keeping the trained network unchanged. A smaller $\tau_{\mathrm{dec}}$ imposes a stricter terminal-residual criterion and generally increases the number of decoder iterations, whereas a larger $\tau_{\mathrm{dec}}$ permits earlier termination. Because the solver residual and reconstruction distortion are distinct quantities, reconstruction quality need not vary monotonically with the stopping threshold.
\begin{figure}[!t]
	\centering
	\includegraphics[width=\columnwidth]{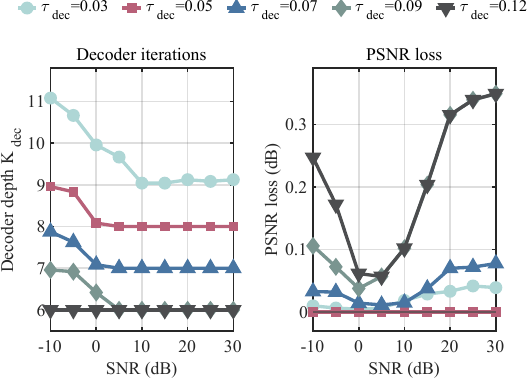}
		\caption{Receiver stopping-threshold tradeoff under the AWGN channel. Left: average decoder effective depth. Right: PSNR loss relative to $\tau_{\mathrm{dec}}=0.05$.}
	\label{fig:awgn_decoder_tau_tradeoff}
\end{figure}

Fig.~\ref{fig:awgn_decoder_tau_tradeoff} presents the decoder effective depth and the corresponding PSNR loss under different stopping thresholds. The strict threshold $\tau_{\mathrm{dec}}=0.03$ requires the largest number of decoder iterations but produces a slight PSNR loss of less than $0.042$ dB relative to $\tau_{\mathrm{dec}}=0.05$, showing that additional equilibrium iterations do not necessarily improve reconstruction quality. Relaxing the threshold to $0.07$ reduces the effective depth with less than $0.078$ dB loss, whereas $0.09$ and $0.12$ can incur losses of approximately $0.349$ dB. The default $\tau_{\mathrm{dec}}=0.05$ therefore provides the best observed fidelity--complexity balance among the evaluated thresholds.

These results demonstrate that Implicit-JSCC supports receiver-side computation--quality control without retraining. 
All stopping-threshold results are obtained using the same trained checkpoint, and only the residual tolerance is changed during inference. 
This differs from conventional fixed-depth JSCC architectures, whose computational cost is largely determined once the network depth is selected. 
Thus, the stopping threshold controls the terminal residual and computation: stricter stopping increases computation without guaranteeing a PSNR gain, whereas moderate relaxation reduces computation with a limited reconstruction penalty.

\subsection{Theory-Guided Depth Characterization}

\begin{figure*}[!t]
	\centering
	\includegraphics[width=0.82\textwidth]{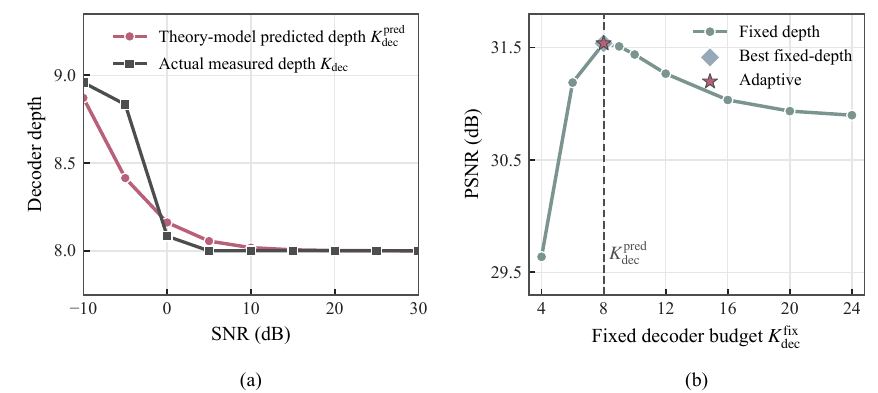}
	\caption{Calibrated theory-guided decoder effective-depth characterization and decoder-only fixed-depth validation under the AWGN channel. Left: continuous analytical prediction calibrated on the independent DIV2K validation set and measured decoder depths on Kodak24. Right: paired fixed-decoder-depth inference sweep at 30 dB, with the encoder retaining adaptive stopping.}
	\label{fig:theory_required_depth}
\end{figure*}

Fig.~\ref{fig:theory_required_depth} further examines whether the proposed depth--SNR analysis provides useful guidance for solver configuration.
The theory-guided depth--SNR characterization is obtained through offline calibration of the system-specific parameters in \eqref{eq:depth_snr_relation} using the effective-depth statistics measured on the independent DIV2K validation set, following Algorithm~\ref{alg:depth_snr_fitting}.
These calibrated parameters are used only for depth--SNR characterization, while the decoder still follows the residual stopping rule during actual inference.

The left panel compares this validation-calibrated continuous prediction with the adaptive decoder depths measured on Kodak24.
The theory-guided estimate decreases as SNR increases, which matches the measured trend of adaptive decoder iterations. 
This agreement shows that the proposed parametric form captures the observed depth--SNR trend and is consistent with the theoretical expectation that lower SNR produces a larger initial perturbation and requires more equilibrium refinement.
The right panel further illustrates the practical relevance of the predicted operating region through decoder-only fixed-depth inference at 30 dB.
When the fixed budget is too small, the decoder terminates before sufficient representation refinement is achieved, leading to a clear PSNR loss. 
As the budget increases toward the predicted depth region, the reconstruction quality improves rapidly. 
At 30 dB, the continuous theory-model prediction obtained from the validation-calibrated parameters is $K_{\mathrm{dec}}^{\mathrm{pred}}=8.0002$, which is practically identical to the measured adaptive depth.
Moreover, the predicted depth lies within the near-saturation region observed in the fixed-depth sweep.
The corresponding PSNR value at fixed depth 8 is 31.54 dB. Thus, the calibrated continuous prediction is consistent with the observed operating depth. 
Further increasing the number of iterations does not guarantee better reconstruction quality.

\begin{figure}[!t]
	\centering
	\includegraphics[width=0.94\columnwidth]{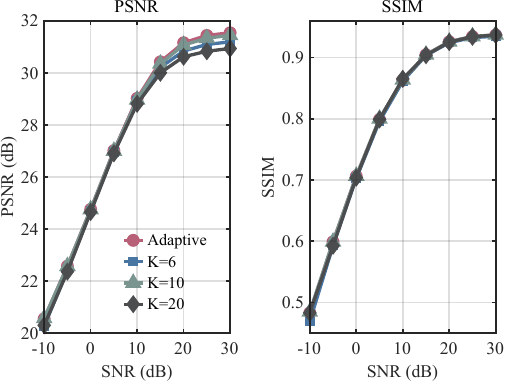}
	\caption{Decoder-only fixed-depth inference ablation under the AWGN channel.}
	\label{fig:fixed_k_ablation}
\end{figure}

Fig.~\ref{fig:fixed_k_ablation} highlights the limitation of using a manually fixed decoder depth. 
A small fixed budget, such as $K_{\mathrm{dec}}^{\mathrm{fix}}=6$, can be insufficient because the decoder is forced to stop before the residual dynamics reach the desired tolerance. 
This effect becomes especially visible at medium and high SNRs, where the channel noise is no longer the dominant bottleneck and reconstruction quality becomes more sensitive to the completeness of semantic refinement. 
A moderate budget such as $K_{\mathrm{dec}}^{\mathrm{fix}}=10$ approaches the adaptive result, but it still wastes computation on samples or SNRs that would have converged earlier. 
At 30 dB, $K_{\mathrm{dec}}^{\mathrm{fix}}=20$ is 0.60 dB below adaptive stopping. This decoder-only control confirms the trend observed in the dense sweep while excluding simultaneous changes in encoder depth.
These results indicate the advantage of adaptive stopping, which assigns computation according to each received sample's residual trajectory instead of using a universal fixed decoder budget.

\begin{figure}[!t]
	\centering
	\includegraphics[width=0.94\columnwidth]{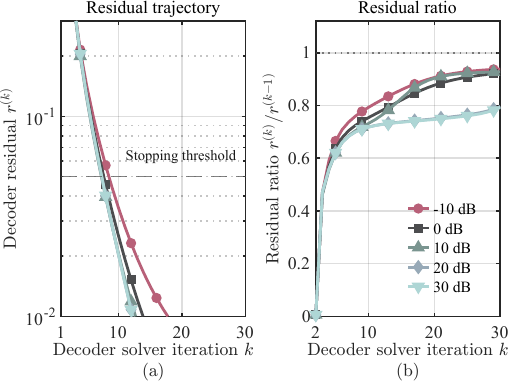}
	\caption{Residual convergence behavior of the implicit decoder under AWGN channel. Left: decoder residual trajectories with the decoder stopping threshold $\tau_{\mathrm{dec}}=0.05$. Right: empirical residual ratios.}
	\label{fig:contraction_residual_awgn}
\end{figure}
To verify the convergence behavior of the proposed implicit receiver, we record the residual trajectory of the decoder fixed-point solver during inference. 
We also report the empirical residual ratio $r^{(k)}/r^{(k-1)}$, which provides an implementation-level diagnostic of whether the fixed-point residual decreases across iterations.
As shown in Fig.~\ref{fig:contraction_residual_awgn}, the decoder residual decreases steadily with the solver iteration index $k$ and eventually falls below the stopping threshold. 
The empirical residual ratio is mostly below one after a short transient stage, indicating that each solver step reduces the discrepancy between consecutive fixed-point iterates. 
This confirms that the learned implicit decoder behaves as a stable refinement process rather than producing divergent or oscillatory updates. 
Lower-SNR inputs require slightly more iterations to reach the same tolerance because stronger channel perturbations lead to larger initial deviations, whereas high-SNR inputs converge more rapidly.
This behavior is consistent with the contractive-dynamics assumption used in the depth--SNR analysis. 
Although the theory characterizes perturbation decay in kernel space, the measured feature-space residuals exhibit a similar approximately geometric decay.
Therefore, the residual trajectories provide an implementation-level diagnostic of equilibrium convergence and are consistent with the contractive refinement behavior assumed in the analytical model. They are analyzed separately and are not used to estimate the kernel propagation factor $\beta$, which is calibrated from the measured effective-depth--SNR relationship.

\subsection{Complexity and Resource Analysis}

\begin{table*}[!t]
	\centering
	\caption{Model complexity comparison of Implicit-JSCC variants and baseline schemes. Peak memory is reported in mebibytes (MiB)}
	\label{tab:complexity_comparison}
	\renewcommand{\arraystretch}{1.1}
	\begin{tabular}{@{}>{\raggedright\arraybackslash}p{0.18\textwidth}cc>{\raggedright\arraybackslash}p{0.34\textwidth}>{\raggedright\arraybackslash}p{0.16\textwidth}@{}}
		\toprule
		Method & Trainable parameters & Peak memory & Encoder/decoder blocks & Depth policy \\
		\midrule
		Implicit-JSCC & 2.42M & 70.05 MiB & shared DEQ (encoder/decoder), $K_{\mathrm{enc}}/K_{\mathrm{dec}}=9.0/8.0$ & Adaptive fixed-point \\
		MLP-Implicit-JSCC variant & 1.06M & 67.17 MiB & shared MLP (encoder/decoder), $K_{\mathrm{enc}}/K_{\mathrm{dec}}=14.0/16.3$ & Adaptive fixed-point \\
		SwinJSCC & 21.89M & 226.73 MiB & 8 Swin / 8 Swin & Explicit feed-forward \\
		ADJSCC & 10.66M & -- & 5 convolution-attention / 5 convolution-attention & Explicit feed-forward \\
		DeepJSCC & 0.14M & 13.96 MiB & 5 convolution / 5 deconvolution & Explicit feed-forward \\
		BPG+LDPC+AMC & -- & -- & codec / LDPC / AMC & Best-decodable MCS \\
		JPEG+LDPC+AMC & -- & -- & codec / LDPC / AMC & Best-decodable MCS \\
		\bottomrule
	\end{tabular}
\end{table*}

Table~\ref{tab:complexity_comparison} summarizes the structural complexity of the compared schemes. The complexity reported for Implicit-JSCC corresponds to the Residual Conv-DEQ instantiation used in the main experiments. It uses one shared fixed-point block in the encoder and one in the decoder, while the actual computation is determined by the residual stopping rule. 
Since the shared fixed-point block is reused across iterations, the dominant inference cost scales approximately linearly with the executed solver depth.
Therefore, the reported $K_{\mathrm{enc}}$ and $K_{\mathrm{dec}}$ should be interpreted as
executed solver depths rather than independently parameterized layers. Compared with SwinJSCC, Implicit-JSCC uses substantially fewer trainable parameters and a lower peak memory footprint. The memory is measured on an NVIDIA GeForce RTX 5080 with PyTorch CUDA inference using a $256\times256$ RGB input. The SwinJSCC row uses the matched small model with depths $[2,2,2,2]$ for both encoder and decoder. The MLP-based implicit variant is included to illustrate that the proposed framework can use a lighter fixed-point map. It reduces the parameter count to 1.06M and the peak memory to 67.17 MiB, and its best AWGN result reaches 31.50 dB PSNR and 0.94 SSIM at 30 dB. However, this lightweight variant requires more fixed-point iterations and provides a weaker spatial image prior. Hence, the Residual Conv-DEQ variant is adopted as the main performance-oriented instantiation of the proposed framework. DeepJSCC has the smallest memory footprint because it adopts a lightweight convolutional neural network (CNN) backbone, but its reconstruction performance is much lower under the same bandwidth ratio. The
peak-memory entries are omitted for ADJSCC and the traditional BPG/JPEG+LDPC+AMC baselines because their implementations use different software stacks and codec libraries. For the same reason, raw runtime is not used as a primary metric. Table~\ref{tab:complexity_comparison} instead reports trainable parameters, peak GPU memory, and executed solver depths as implementation-level structural indicators.

\begin{figure*}[!t]
	\centering
	\includegraphics[width=0.97\textwidth]{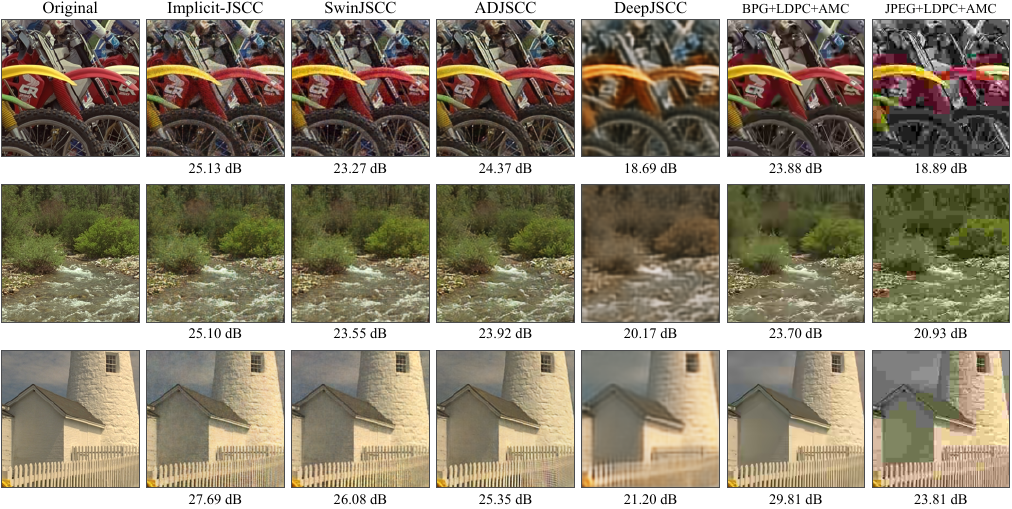}
	\caption{Visual reconstruction comparison under the AWGN channel at $\mathrm{SNR}=10$ dB. For each image, the BPG+LDPC+AMC and JPEG+LDPC+AMC columns report the highest-PSNR operating point among all exactly decoded combinations of four modulation orders and three LDPC rates.}
	\label{fig:resconvdeq_reconstruction_grid}
\end{figure*}

\subsection{Visual Reconstruction Results}
Fig.~\ref{fig:resconvdeq_reconstruction_grid} compares the visual reconstruction quality of different transmission schemes under the AWGN channel at $\mathrm{SNR}=10$ dB, representing a moderate-SNR operating condition. Three Kodak images are selected to cover different visual contents, including structured objects, natural textures, and smooth architectural regions. All methods are evaluated under the same bandwidth ratio and the same channel condition. The PSNR value shown below each reconstructed image is computed with respect to the corresponding original image. The BPG+LDPC+AMC and JPEG+LDPC+AMC results are obtained by exhaustive bit-level evaluation of all modulation--coding combinations for each image, retaining only operating points that recover the complete source bitstream. Across the three examples, Implicit-JSCC achieves an average PSNR/SSIM of 25.97~dB/0.8115, compared with 25.80~dB/0.7178 for BPG+LDPC+AMC and 21.21~dB/0.5491 for JPEG+LDPC+AMC. The proposed Implicit-JSCC preserves the main object contours, global color tone, and local structural details across the selected examples. Compared with DeepJSCC, it produces clearer boundaries and alleviates over-smoothed texture patterns. Compared with SwinJSCC and ADJSCC, Implicit-JSCC achieves competitive visual fidelity while additionally supporting residual-based adaptive fixed-point inference. The separated BPG+LDPC+AMC and JPEG+LDPC+AMC baselines can provide reasonable reconstructions when channel decoding succeeds, but they are more sensitive to the source-coding and channel-decoding operating point, leading to blocking artifacts, texture loss, or local structural distortions under the same channel-use budget. These qualitative observations are consistent with the PSNR and SSIM comparisons in Fig.~\ref{fig:algorithm_comparison_psnr} and Fig.~\ref{fig:algorithm_comparison_ssim}. They further show that Implicit-JSCC maintains stable perceptual reconstruction quality while retaining the effective-depth inference capability provided by the implicit equilibrium decoder.

\section{Conclusion}\label{s6}

This paper proposed Implicit-JSCC, an implicit equilibrium framework for deep joint source-channel coding that determines the effective computation depth through fixed-point solver convergence.
By reformulating semantic encoding and decoding as fixed-point equilibrium processes, the proposed method avoids manually predefined encoder and decoder depths while sharing parameters across equilibrium iterations.
An NNGP-inspired kernel evolution analysis was developed to relate channel-induced representation perturbations and the continuous receiver-side refinement depth.
Experiments on DIV2K and Kodak24 demonstrated the effectiveness of the proposed framework.
The results show that Implicit-JSCC achieves stable reconstruction performance and supports receiver-side computation--quality control through residual-based adaptive stopping.
The theoretical analysis explains why larger initial perturbations require more iterations to reach a prescribed convergence tolerance, while the fixed-depth simulations show that insufficient iterations can lead to under-refinement and that excessive iterations do not necessarily improve reconstruction quality.
These results suggest that implicit equilibrium modeling provides a promising approach for studying and controlling effective depth in semantic communication systems.

\bibliographystyle{IEEEtran}
\bibliography{Refer_ImplicitJSCC}

\end{document}